\newtheorem{lemma}{Lemma}[section]
\newtheorem{theorem}{Theorem}[section]
\newtheorem{claim}{Claim}[section]
\DeclareMathOperator{\tww}{\mathbf{tww}}
\DeclareMathOperator{\tw}{\mathbf{tw}}
\DeclareMathOperator{\bw}{\mathbf{bw}}
\title{Bounding twin-width for bounded-treewidth graphs, planar graphs, and bipartite graphs\thanks{%
This research is part of projects that has received funding from the European Research Council (ERC) under the European Union's Horizon 2020 research and innovation programme Grant Agreement 714704.
Initial part of the reseach was done when HJ was on an internship at University of Warsaw in Spring and Summer 2021.}}
\author{Hugo Jacob\footnote{ENS Paris-Saclay, France, \texttt{hugo.jacob@ens-paris-saclay.fr}.} \and
	Marcin Pilipczuk\footnote{Institute of Informatics, University of Warsaw, Poland, \texttt{malcin@mimuw.edu.pl}.}}
\date{}
\begin{document}
\maketitle

\begin{textblock}{20}(0.9, 11.6)
\includegraphics[width=30px]{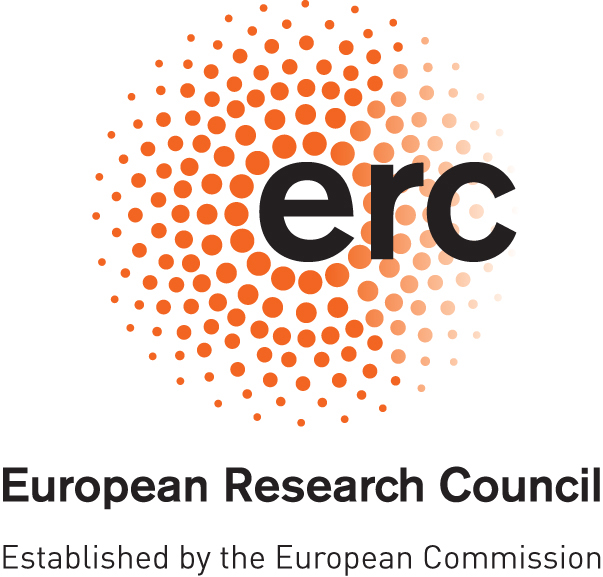}%
\end{textblock}
\begin{textblock}{20}(0.9, 12.3)
\includegraphics[width=30px]{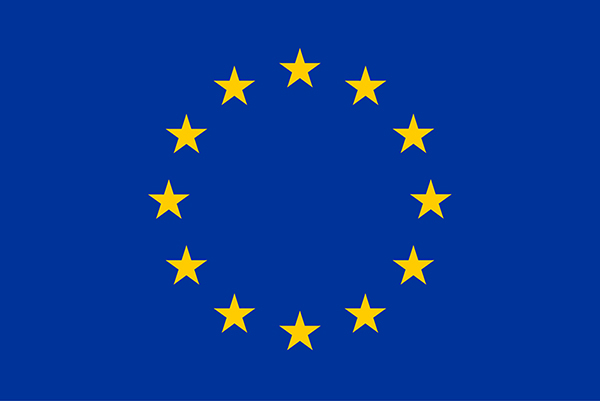}%
\end{textblock}

\begin{abstract}
Twin-width is a newly introduced graph width parameter that aims at generalizing a 
wide range of ``nicely structured'' graph classes. 
In this work, we focus on obtaining good bounds on twin-width $\tww(G)$ for graphs $G$ from a number of classic graph classes.
We prove the following:
\begin{itemize}
	\item $\tww(G) \leq 3\cdot 2^{\tw(G)-1}$, where $\tw(G)$ is the treewidth of $G$,
	\item $\tww(G) \leq \max(4\bw(G),\frac{9}{2}\bw(G)-3)$ for a planar graph $G$ with $\bw(G) \geq 2$, where $\bw(G)$ is the branchwidth of $G$,
	\item $\tww(G) \leq 183$ for a planar graph $G$,
	\item the twin-width of a universal bipartite graph $(X,2^X,E)$ with $|X|=n$ is $n - \log_2(n) +
	\mathcal{O}(1)$ .
\end{itemize}

An important idea behind the bounds for planar graphs is to use an embedding of the graph and sphere-cut decompositions to obtain good bounds on neighbourhood complexity. 
\end{abstract}

\section{Introduction}

Twin-width is a graph parameter recently introduced by Bonnet et al \cite{Twinwidth1}, which has
already proven to be very versatile and useful. It is defined via iterated contraction of vertices that are
almost twins, while limiting the amount of errors that are carried on. Twin-width is known, for
instance, to be bounded on classes of graphs of bounded treewidth, bounded rank-width, or excluding
a fixed minor \cite{Twinwidth1}. It is also possible to design algorithms on the contraction
sequences, thus providing a common framework for efficient algorithms on several graph classes
\cite{Twinwidth1,Twinwidth3}. Twin-width is also linked to First Order logic, FO model checking is
FPT for graphs of bounded twin-width, and FO transductions preserve twin-width boundedness
\cite{Twinwidth1} (see also \cite{GPT21}).
However, finding good contraction sequences is hard \cite{TwwNPc}, and the arguments used to show
the boundedness of twin-width do not necessarily provide a constructive way of obtaining a
contraction sequence. This motivates more detailed comparisons of twin-width to other parameters
(the case of poset width has already been considered \cite{BalabanHlineny21} for instance). 

Many currently known bounds on the twin-width, in particular for minor-closed graph classes such as planar graphs,
     rely on very general arguments and result in unreasonably large constants. 
Finding a better bound was explicitly mentionned as an open problem.
In this paper, we present a few results we obtained while looking for an improved bound.

We first give some results on graphs of bounded treewidth: an exponential bound on the
twin-width of a graph of bounded treewidth, and a linear bound on the twin-width of planar graphs of
bounded treewidth. We then obtain a bound of $183$ on the twin-width of planar graphs, which is, to
the best of our knowledge, currently the best known bound.
We were not able to prove a matching exponential lower bound for the twin-width of graphs of bounded treewidth.
As a partial result in this direction, we determine the twin-width of
universal bipartite graphs up to a constant additive term.

Independently of this work, Bonnet, Kwon, and Wood~\cite{kwon}
obtained a bound of $583$ on the twin-width of planar graphs, among other
results on more general classes such as bounded genus graphs.

\section{Preliminaries}

In the following $[n]$ denotes $\{1,\ldots,n\}$. Given a set $X$, $|X|$ denotes its cardinality and
$2^X$ denotes the set of subsets of $X$.

The subgraph induced by vertex subset $A$ in graph $G$ is denoted by $G[A]$, $G - A$ denotes $G[V
\setminus A]$. The neighbourhood of vertex $v$ in $G=(V,E)$ is $N(v)=\{w \in V | \{v,w\} \in E\}$,
and we extend this notation with $N(X)=\bigcup_{x \in X} N(x)$. To emphasize that the neighbourhood is
taken in graph $G$, we use $N_G$ instead of $N$. 

We call neighbourhood classes with respect to $Y$ in $X$ the set $\Omega(X,Y)=\{N(x) \cap Y : x \in
X\}$. Note that if $|Y|=k$, then $|\Omega(X,Y)|\leq 2^k$.

We call universal bipartite graph the bipartite graph $\mathcal{B}(n)=([n],2^{[n]},\{(k,A \cup
\{k\}) : k \in [n], A \in 2^{[n] \setminus \{k\}}\})$.

We now define formally the notion of \emph{twin-width} of a graph. A \emph{trigraph} is a triple
$G=(V,E,R)$ where $E$ and $R$ are disjoint sets of edges on $V$, the (usual) edges and the red edges
respectively. The notion of induced subgraph is extended to trigraphs in the obvious way.
We denote by $R(v)$ the red neighbourhood of $v$. A trigraph $(V,E,R)$
such that $(V,R)$ has maximum degree at most $d$ is a $d$-trigraph. Any graph $(V,E)$ can be seen as
the trigraph $(V,E,\varnothing)$.
Given a trigraph $G=(V,E,R)$ and two vertices $u,v$ of $V$, the trigraph $G'=(V',E',R')$ obtained by
the contraction\footnote{The vertices are not required to be adjacent.} of $u,v$ into a new vertex $w$
is defined as the trigraph on vertex set $V'=V\setminus\{u,v\} \cup \{w\}$, such that $G -
\{u,v\}=G'-\{w\}$, and such that $N_{G'}(w)=N_G(u) \cap N_G(v)$ and $R_{G'}(w)=R_G(u) \cup R_G(v)
\cup (N_G(u) \Delta N_G(v))$, where $\Delta$ denotes the symmetric difference.
A \emph{$d$-contraction sequence} of $G$ is a sequence of trigraph contractions starting with $G$
ending with the single-vertex trigraph, such that all intermediate trigraphs have maximum red degree
$d$. The \emph{twin-width} of graph $G$ is the minimum $d$ such that there exists a $d$-contraction
sequence, it is denoted $\tww(G)$.

We use the notation of \cite{Courcelle18a} for tree decompositions.
Given a rooted tree $T$, $N_T$ denotes its nodes, $\leq_T$ denotes its \emph{ancestor relation}
which is a partial order on $N_T$ where the root is the maximal element, and the leaves are the
minimal elements.
For a fixed node $u$ of $T$, we denote by $p(u)$ its parent (minimal strict ancestor), by
$T_\leq(u)$ the set $\{w \in N_T | w \leq_T u\}$ and similarly for $T_<(u), T_\geq(u), T_>(u)$.
A tree $T$ is \emph{normal} for graph $G$ if $V(G)=N_T$, and for each edge of $G$, its endpoints are
comparable under $<_T$.
We denote by $(T,f)$ a \emph{tree decomposition} of $G$ where $T$ is a rooted tree, $f$ maps $N_T$ to
$2^{V(G)}$ and satisfies the following conditions: every vertex of $G$ is contained in at least one
\emph{bag} $f(u)$, for every edge of $G$ there is a bag containing its two endpoints, and for every
vertex of $G$, the nodes $u$ such that $f(u)$ contains it induced a connected subgraph of $T$. 
$(T,f)$ is \emph{normal} if $T$ is normal for $G$, $f(u) \subseteq T_\geq(u)$ and $u \in f(u)$, for
every $u \in N_T$. $f^*(u)$ denotes $f(u) \setminus \{u\}$. 
$(T,f)$ is \emph{clean} if it is normal, $f^*(u)=N_G(T_\leq(u))\cap T_>(u)$ for every node $u$ of
$T$, and $p(u) \in f(u)$ for every node $u$ of $T$ except its root.
The width of $(T,f)$ is $\max_{u \in N_T} |f(u)|-1$, and the treewidth of a graph is the minimum
width over its tree decompositions. It is denoted by $\tw(G)$.

Let $\Sigma$ be a sphere $\{(x,y,z) \in \mathbb{R}^3 | x^2+y^2+z^2=1\}$. A \emph{$\Sigma$-plane} graph $G$
is a planar graph embedded in $\Sigma$ without crossing edges. To simplify notations, we do not
distinguish vertices and edges from the points of $\Sigma$ representing them. An $O$-arc is a subset of $\Sigma$
homeomorphic to a circle. An $O$-arc in $\Sigma$ is a \emph{noose} if it meets $G$ only in
vertices and intersects every face at most once. The length of a noose is the number of vertices it
meets. Every noose $O$ bounds two open discs $\Delta_1,\Delta_2$ in $\Sigma$, i.e., $\Delta_1 \cap
\Delta_2 = \varnothing$ and $\Delta_1 \cup \Delta_2 \cup O = \Sigma$.

A \emph{branch decomposition} $(T,\mu)$ of a graph $G$ consists of a ternary tree $T$ (internal
vertices of degree $3$) and a bijection $\mu : L \to E(G)$ from the set $L$ of leaves of $T$ to the
edge set of $G$. For every edge $e$ of $T$, the \emph{middle set} of $e$ is a subset of $V(G)$
corresponding to the common vertices of the two graphs induced by the edges associated to the leaves
of the two connected components of $T - e$. The width of the decomposition is the maximum
cardinality of the middle sets over all edges of $T$. An optimal decomposition is one with minimum
width, which is called \emph{branchwidth} and denoted by $\bw(G)$.

For a $\Sigma$-plane graph $G$, a \emph{sphere-cut decomposition} $(T,\mu)$ is a branch
decomposition such that for every edge $e$ of $T$, there exists a noose $O_e$ meeting $G$ only on
the vertices of the middle set of $e$ and such that the two graphs induced by the edges associated
to the leaves of the two connected components of $T - e$ are each on one side of $O_e$. 
The following result is stated in \cite{ExploitingSphereCut} as a consequence of the results of 
Seymour and Thomas \cite{Ratcatcher}, and Gu and Tamaki \cite{GuTamaki08}.

\begin{lemma}
	Let $G$ be a connected $\Sigma$-plane graph of branchwidth at most $\ell$ without vertices of
	degree one. There exists a sphere-cut decomposition of $G$ of width at most $\ell$, and it can
	be computed in time $\mathcal{O}(|V(G)|^3)$.
\end{lemma}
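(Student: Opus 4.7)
The plan is to reduce the statement to the two cited works and to verify that the stated hypotheses (connectedness and minimum degree two) are exactly what those arguments need. First I would compute the branchwidth of $G$ via the Seymour--Thomas ratcatcher algorithm \cite{Ratcatcher}, which on a planar graph runs in cubic time and certifies a branchwidth value at most $\ell$, together with an optimal branch decomposition $(T,\mu)$. Then I would convert $(T,\mu)$ into a sphere-cut decomposition by following Gu--Tamaki \cite{GuTamaki08}, i.e.\ by associating to each edge $e$ of $T$ a noose $O_e$ in the embedding of $G$ on $\Sigma$.

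The noose construction proceeds bottom-up on $T$. For an edge $e$ of $T$, let $F_e$ be the set of edges of $G$ mapped to leaves on one side of $e$; I want an $O$-arc that meets $G$ exactly on the middle set of $e$, crosses each face at most once, and separates the drawings of $F_e$ and $E(G)\setminus F_e$. At a leaf edge of $T$, $F_e$ is a single edge of $G$, encircled by a small circle. At an internal node, the two child nooses $O_{e_1}, O_{e_2}$ are spliced along the vertices they share and then rerouted to eliminate faces the resulting curve enters twice. Connectedness of $G$ ensures that the edges of $F_e$ drawn on $\Sigma$ have a well-defined interior region, and the assumption of minimum degree two prevents any middle-set vertex from being a pendant on the $F_e$-side, which is what forces every vertex the noose meets to be met exactly once.

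The main obstacle is the splicing/rerouting step: one has to argue that whenever the combined curve visits a face twice, it can be redrawn locally inside that face so as to visit the face at most once while leaving the set of graph vertices it meets unchanged. This is the same planar exchange argument used by Seymour and Thomas to relate branchwidth to carving width of the medial graph \cite{Ratcatcher}, and the minimum-degree-two condition is precisely what rules out the degenerate cases where the exchange would enlarge the middle set. The $\mathcal{O}(|V(G)|^3)$ running time then follows: a single run of the ratcatcher is cubic, and the bottom-up conversion consists of $\mathcal{O}(|V(G)|)$ merges, each implementable in quadratic time. Since the argument is carried out in detail in \cite{Ratcatcher,GuTamaki08} and assembled in \cite{ExploitingSphereCut}, the cleanest proof is a reference to those papers after checking that the hypotheses line up, which is what I would write.
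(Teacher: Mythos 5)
The paper offers no proof of this lemma at all: it is imported verbatim from \cite{ExploitingSphereCut} as a consequence of Seymour--Thomas \cite{Ratcatcher} and Gu--Tamaki \cite{GuTamaki08}, which are exactly the references you assemble, and your conclusion that the cleanest write-up is a citation after checking that connectedness and minimum degree two are the hypotheses those arguments need is precisely what the paper does. Your intermediate sketch of the noose construction is informal (the literature actually routes the argument through carving decompositions of the medial graph rather than a direct bottom-up splicing on the branch decomposition tree, as you yourself hint), but since you explicitly defer the details to the cited works, the approach matches the paper's.
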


A sphere-cut decomposition $(T,\mu)$ can be rooted by subdividing an edge $e$ of $T$ into two edges
$e',e''$ with middle vertex $s$, and adding a root $r$ connected to $s$. The middle set of $e'$ and
$e''$ is the middle set of $e$, and $\{r,s\}$ has an empty middle set. For every edge $e$ of $T$,
the subtree that does not contain the root is called the \emph{lower part}, we denote by $G_e$ the
subgraph induced by the edges associated to the leaves of the lower part. For an internal node $v$
of $T$, the edge on the path to $r$, is called the parent edge, and the other two are called
children edges. There can be at most $2$ vertices common to the middle sets of these three edges
\cite{ExploitingSphereCut}. 

We slightly extend sphere-cut decompositions to cover the case of connected graphs with minimal
degree one and branchwidth at least 2. Consider a connected graph $G$, let $G'$ be its maximal
induced subgraph with no vertex of degree one. Note that $G'$ must be connected and that the graph
$H$ induced by the edges $E(G) \setminus E(G')$ is a forest where each tree has only a vertex in
common with $G'$, which we will consider as its root. We can first compute a sphere-cut
decomposition $(T',\mu')$ of $G'$ and then for each root $r$ of a tree $H_i$ in $H$, we can find an
edge $e$ of $T'$ such that $r$ is in its middle set (it exists because $r$ has degree at least $2$ in
$G'$), and attach an optimal branch decomposition of $H_i$ on $e$. This does not increase the branchwidth
because $r$ was already in the middle set of $e$. Once this is done for all trees $H_i$ in $H$, we
obtain a branch decomposition $(T,\mu)$ of $G$, such that there exists a noose meeting exactly the
middle set of each edge of $T$. However, the nooses do not correspond to cycles in the radial graph
anymore since we have to embed the $H_i$ in faces of $G'$.

\begin{lemma}
	Let $G$ be a connected $\Sigma$-plane graph of branchwidth $\ell \geq 2$. There exists a
	sphere-cut decomposition of $G$ of width $\ell$, and it can be computed in time
	$\mathcal{O}(|V(G)|^3)$.
\end{lemma}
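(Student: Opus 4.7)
The plan is to convert the informal construction sketched in the paragraph preceding the statement into a proof. First I would let $G'$ be the maximal induced subgraph of $G$ obtained by iteratively deleting vertices of degree one; $G'$ inherits the $\Sigma$-plane embedding, is connected (since each deletion removes a leaf and the hypothesis $\bw(G)\geq 2$ guarantees that $G'$ is nonempty and has no isolated vertex), has minimum degree at least two, and satisfies $\bw(G')\leq \bw(G)=\ell$ because $E(G')\subseteq E(G)$. The previous lemma then yields a sphere-cut decomposition $(T',\mu')$ of $G'$ of width at most $\ell$ in time $\mathcal{O}(|V(G)|^3)$. Let $H$ be the subgraph induced by $E(G)\setminus E(G')$; as already observed, $H$ is a forest, and each of its components $H_i$ shares a single vertex $r_i$ with $G'$, which I take as its root.

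The next step is to graft $H$ onto $(T',\mu')$. For each component $H_i$ I would compute an optimal rooted branch decomposition $(T_i,\mu_i)$ in linear time; since $H_i$ is a tree, its width is at most $2\leq\ell$. Because $r_i$ has degree at least two in $G'$, some edge $e_i$ of $T'$ has $r_i$ in its middle set, and I would subdivide $e_i$ at a new node $s_i$ and attach $T_i$ there by identifying its root with $s_i$. The crucial bookkeeping is that vertices of $V(H_i)\setminus\{r_i\}$ do not appear in $V(G')$, hence the middle sets of the two halves of the subdivided $e_i$ are still exactly the original middle set of $e_i$, while middle sets internal to $T_i$ are subsets of $V(H_i)$ of size at most $2$. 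Thus the resulting branch decomposition $(T,\mu)$ of $G$ has width at most $\ell$.

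Finally I would verify the noose property. For an edge inherited from $T'$, the original noose $O_e$ meets $G'$ exactly at the middle set of $e$; since each $H_i$ is embedded inside a single face of $G'$ and $O_e$ enters any face at most once, $O_e$ can be locally deformed inside each such face around the finite tree $H_i$, placing $H_i$ on the side dictated by the attachment of $T_i$, to obtain an $O$-arc meeting $G$ exactly at the middle set of $e$. For an edge of $T_i$, its middle set lies in the closure of a single face of $G'$, so the required noose can simply be drawn inside that face. The total running time is dominated by the call to the previous lemma, giving the announced $\mathcal{O}(|V(G)|^3)$ bound. The only technical point that requires care is the planar routing argument inside the faces of $G'$; the rest is straightforward manipulation of middle sets.
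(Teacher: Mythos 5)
Your proposal is essentially the paper's own argument: the construction you describe (peel off the degree-one forest $H$, take a sphere-cut decomposition of the pruned graph $G'$ via the previous lemma, and graft optimal branch decompositions of the trees $H_i$ onto edges of $T'$ whose middle sets contain their roots) is exactly the one the paper gives in the paragraph preceding the statement, and the paper's formal proof consists only of the running-time remark. You in fact supply more detail than the paper does, in particular for the middle-set bookkeeping and the deformation of the nooses around the trees $H_i$ embedded in faces of $G'$, a point the paper only gestures at.
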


\begin{proof}

Computing $G'$ and $H$ can be done in time $\mathcal{O}(|E(G)|)$ and the optimal decompositions of
the trees in $H$ can be produced in total time $\mathcal{O}(|V(G)|)$.

\end{proof}

\section{Twin-width of graphs of bounded treewidth}

The following result reuses a method to bound clique-width described in \cite[Proposition
13]{Courcelle18a}.

\begin{lemma}\label{lemma:upperbound-tww-tw}
	For an undirected graph $G$, $\tww(G) \leq 3\cdot 2^{\tw(G)-1}$.
\end{lemma}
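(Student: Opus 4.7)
The plan is to construct a $d$-contraction sequence for $G$ with $d = 3\cdot 2^{k-1}$ where $k = \tw(G)$, guided by a clean normal tree decomposition $(T,f)$ of $G$ of width $k$. I process $T$ in post-order, maintaining the invariant that after a node $u$ has been processed, the vertices of $T_{\leq}(u)$ have been contracted to a single super-vertex $y_u$ whose entire (black and red) external neighborhood lies in $f^*(u)$; in particular $y_u$ has red degree at most $k$. For a leaf $u$ this holds immediately with $y_u = u$.

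The substantive step is processing an internal node $u$ with children $c_1,\ldots,c_m$. By the invariant, each child's super-vertex $y_{c_i}$ has its external neighbors in $f^*(c_i)\subseteq \{u\}\cup f^*(u)$, a set of size at most $k+1$. I merge $u,y_{c_1},\ldots,y_{c_m}$ into $y_u$ in two phases. Phase~1 classifies each $y_{c_i}$ by its \emph{type}, i.e.\ the function $\{u\}\cup f^*(u)\to\{\text{black},\text{red},\text{absent}\}$ describing its incident edge colors, and contracts all same-type super-vertices into a single representative per type. Two same-type super-vertices have identical neighborhoods, so each such contraction neither creates new red edges nor raises the red degree of any vertex; it can only decrease red degrees.

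Phase~2 merges the at most $3^{k+1}$ remaining type representatives together with $u$ into $y_u$. The ordering of contractions is chosen following Courcelle's argument for the analogous bound on clique-width: first split the representatives into three classes according to their edge color to $u$ (black, red, absent), then within each class merge pairs with the largest common $f^*(u)$-pattern first, so that each contraction cancels several concurrent red edges rather than producing many new ones. An inductive count on the $2^k$ possible restrictions to $f^*(u)$, combined with this three-way split on the edge to $u$, shows that at every intermediate trigraph the number of active representatives red-adjacent to any fixed $v\in f^*(u)$ is at most $3\cdot 2^{k-1}$. Vertices outside $\{u\}\cup f^*(u)$ do not enter the analysis: by normality of $T$, $G$-edges run only between comparable nodes, so no such vertex has any edge to a super-vertex inside $T_{\leq}(u)$, and previously-processed subtrees rooted elsewhere are likewise inert during this phase.

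The main obstacle is the bookkeeping in phase~2: one must track every transient super-vertex produced during the merge and verify both that its own red degree stays within $3\cdot 2^{k-1}$ and that external red degrees at $v\in f^*(u)$ do too. The crucial point is that the three-way partition by edge to $u$ (providing the factor $3$) together with the pairing on $f^*(u)$-refinements (halving $2^k$ to $2^{k-1}$) is exactly what reduces the naive $2^{k+1}$ or $3^{k+1}$ type-count bound down to the claimed $3\cdot 2^{k-1}$.
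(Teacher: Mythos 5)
Your invariant is too strong to maintain and this breaks the argument at its root. You propose that after processing a node $u$, all of $T_{\leq}(u)$ is contracted to a \emph{single} super-vertex $y_u$, with red edges allowed from $y_u$ into $f^*(u)$. The red degree of $y_u$ is indeed at most $k$, but you never control the red degree of the vertices \emph{of} $f^*(u)$, and these accumulate red edges from every completed subtree that sees them. Concretely, take a spider with center $v$ and $n$ legs $v\,\text{--}\,a_i\,\text{--}\,b_i$ of length two; this has treewidth $1$, so the claimed bound is $3$. In any clean decomposition rooted at $v$, each child subtree is $\{a_i,b_i\}$, and contracting it to a single vertex $y_{a_i}$ creates a red edge $y_{a_i}v$ (since $v\in N(a_i)\setminus N(b_i)$). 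After all $n$ children are processed in post-order --- i.e.\ before your phases 1 and 2 at $v$ even begin --- the vertex $v$ has red degree $n$. So the sequence you describe is not a $3\cdot 2^{k-1}$-contraction sequence, and no reordering within your framework fixes this, because the single-super-vertex invariant forces an outgoing red edge from essentially every non-trivial subtree.

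The paper's proof keeps a weaker but sustainable invariant: each subtree $T_{\leq}(u_i)$ is contracted not to one vertex but to a set $A_i$ with one representative per neighbourhood class with respect to $f^*(u_i)$ (so $|A_i|\leq 2^{k}$), and --- crucially --- \emph{all red edges incident to $A_i$ stay inside $A_i$}; no red edge ever leaves a finished subtree. Vertices of the bags above therefore never accumulate red edges from completed subtrees. The children of $v$ are then absorbed one at a time into a running set $B_{i-1}$ of size at most $2^{k}$ classified by neighbourhood in $f^*(v)$, after first refining $A_i$ to $\widetilde{A}_i$ of size at most $2^{k-1}$ (classes with respect to $f^*(u_i)\setminus\{v\}$); the factor $3$ comes from the bound $|B_{i-1}|+|\widetilde{A}_i|\leq 2^{k}+2^{k-1}$ on the red degree of $v$, not from a three-way split by edge colour to $u$. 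Your phase-2 bookkeeping is also only sketched (``an inductive count \ldots shows''), but the decisive gap is the invariant itself.
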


\begin{proof}
	We consider a connected graph $G$ as the twin-width of a disconnected graph is simply the
	maximum twin-width of one of its connected components.
	
	We consider a clean tree decomposition $(T,f)$ of $G$ of width $\tw(G)$ (this
	is always possible \cite[Lemma 3, Lemma 5]{Courcelle18a}).
	
	We proceed by induction on the tree $T$.	
	Consider node $v$ with children $u_1,\dots,u_k$

	We assume that for each $u_i$, we have contracted $V(T_{\leq}(u_i))$ into $A_i$ consisting of at
	most $|\Omega(T_{\leq}(u_i),f^*(u_i))|$ vertices such that their incident red edges have both
	endpoints within $A_i$.

	We will contract these sets of vertices into a set $C$ consisting of at most
	$|\Omega(T_{\leq}(v),f^*(v))|$ vertices.

	Let $B_0=\varnothing$.
	We will inductively obtain for each $i \in [k]$ a vertex set $B_i$ of size at most
	$\left|\Omega\left(\bigcup \limits_{j=1}^i T_{\leq}(u_j),f^*(v)\right)\right|$, by contracting
	vertices of $\bigcup \limits_{j=1}^i A_i$.

	For each $i \in [k]$, we first contract vertices of $A_i$ that have the same neighbourhood in
	$f^*(v)$, this produces $\widetilde{A}_i$ consisting of at most
	$|\Omega(T_{\leq}(u_i),f^*(u_i)-\{v\})|$ vertices.  Doing so will produce at most
	$|\widetilde{A}_i|$ red edges incident to $v$, which now has at
	most $|B_{i-1}|+|\widetilde{A}_i|$ incident red edges.
	We then contract vertices of $\widetilde{A}_i \cup B_{i-1}$ that have the same neighbourhood in
	$f^*(v)$, producing $B_i$ consisting of at most $\left|\Omega\left(\bigcup \limits_{j=1}^i
	T_{\leq}(u_j),f^*(v)\right)\right|$ vertices. Note that the red degree of a vertex resulting
	from one of these contractions is at most $|\widetilde{A}_i|-1 + |B_{i-1}|-1 + |\{v\}| \leq
	|B_{i-1}| + |\widetilde{A}_i|$.
	Vertex $v$ now has $|B_i|$ incident red edges. 
	
	After this we can contract $v$ with the vertex of $B_k$ having the same neighbourhood in
	$f^*(v)$ if it exists. This produces $C$ consisting of at most $|\Omega(T_{\leq}(v),f^*(v))|$
	vertices and such that their incident red edges remain within $C$.

	In all of the described steps, the red degree of a vertex is at most $3 \cdot 2^{\tw(G)-1}$:
	\begin{itemize}
		\item Vertices in $A_i$ have red degree at most $|A_i| \leq |\Omega(T_{\leq}(u_i),f^*(u_i))|
		\leq 2^{\tw(G)}$.

		\item Vertices in $\widetilde{A}_i$ have red degree at most $|\widetilde{A}_i| \leq
		|\Omega(T_{\leq}(u_i),f^*(u_i)-\{v\})| \leq 2^{\tw(G)-1}$.

		\item $v$ has red degree at most $|B_{i-1}|+|\widetilde{A}_i| \leq \left|\Omega\left(\bigcup
		\limits_{j=1}^{i-1} T_{\leq}(u_j),f^*(v)\right)\right| +
		|\Omega(T_{\leq}(u_i),f^*(u_i)-\{v\})| \leq 3 \cdot 2^{\tw(G)-1}$.

		\item When contracting $B_{i-1} \cup \widetilde{A}_i$, vertices have red degree at most
		$|B_{i-1}| + |\widetilde{A}_i| \leq 3 \cdot 2^{\tw(G)-1}$
	\end{itemize}

	Since the property is trivial on leaves of the tree, we conclude that $\tww(G) \leq 3\cdot
	2^{\tw(G)-1}$.
\end{proof}

Using sphere-cut decompositions, we establish the following lemma.

\begin{lemma}\label{lemma:tww-bw-planar}
	For an undirected connected planar graph $G$ with $\bw(G) \geq 2$, $\tww(G) \leq
	\max(4\bw(G),\frac{9}{2}\bw(G)-3) \leq \max(4\tw(G) + 4,\frac{9}{2}\tw(G)+\frac{3}{2})$.

	For an undirected connected planar graph $G$ with $\bw(G) \leq 1$, $\tww(G)=0$.
\end{lemma}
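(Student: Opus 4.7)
The plan is to apply the sphere-cut decomposition guaranteed by the previous lemma and run a bottom-up induction on its decomposition tree, using the noose structure (a cyclic order on the boundary of each piece together with the ``at most $2$ vertices common to the three middle sets'' fact recalled in the preliminaries) in place of the generic bag-to-subset counting used in the treewidth proof. The degenerate case $\bw(G) \leq 1$ is immediate: such a connected graph is a single vertex, a single edge, or a star, and contracting leaves one at a time produces no red edges, so $\tww(G) = 0$.

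For $\bw(G) \geq 2$, I would fix a sphere-cut decomposition $(T,\mu)$ of width $\ell = \bw(G)$, root it by subdividing an edge, and process $T$ from the leaves up. The invariant to maintain for every edge $e$ of $T$ is that $V(G_e)$ has been contracted into a set of tokens indexed by a subset of the middle set $\mathrm{mid}(e)$, hence of size at most $\ell$, and that all currently present red edges incident to these tokens stay inside this token set. For a leaf edge $G_e$ consists of a single edge, so the invariant is trivial. At an internal node $v$ with parent edge $e_p$ and children edges $e_1, e_2$, the inductive step has to merge the token sets sitting on $\mathrm{mid}(e_1)$ and $\mathrm{mid}(e_2)$, eliminate the tokens corresponding to vertices of $(\mathrm{mid}(e_1) \cup \mathrm{mid}(e_2)) \setminus \mathrm{mid}(e_p)$, and re-establish the invariant at $e_p$. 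The cyclic order on each noose lets us schedule the eliminations so that each single contraction touches only a bounded local window of tokens, rather than the whole boundary.

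The main obstacle is the red-degree accounting during this merge step that yields exactly the bound $\max(4\bw(G), \tfrac{9}{2}\bw(G)-3)$. The intuition is that, at the worst moment, a vertex undergoing contraction can accumulate red edges coming from the up to four sub-arcs of tokens lying on the two nooses $O_{e_1}$ and $O_{e_2}$, each of size roughly $\ell$, giving the $4\bw(G)$ regime; a finer case analysis using the fact that only at most $2$ vertices are common to the three middle sets produces a transient red degree of at most $\tfrac{9}{2}\bw(G) - 3$ when the middle sets share few vertices, which is the regime that eventually dominates. Once the root edge is processed the graph has been contracted to a single vertex, so the described contraction sequence is complete and witnesses the claimed bound. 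The treewidth version of the inequality then follows from the standard estimate $\bw(G) \leq \tw(G) + 1$.
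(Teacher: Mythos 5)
Your overall skeleton (root a sphere-cut decomposition of width $\ell=\bw(G)$, induct bottom-up, merge the two children's contracted sets at each internal node) matches the paper's, but the invariant you propose cannot be maintained, and you are missing the one idea that makes the whole argument work. You want $V(G_e)$ contracted to at most $|\mathrm{mid}(e)|\leq\ell$ tokens with all red edges internal to the token set. But two vertices of $G_e$ lying strictly inside the noose that have \emph{different} neighbourhoods in $\mathrm{mid}(e)$ cannot be contracted together without creating a red edge to a vertex of the middle set, which lies outside your token set. So the weakest invariant compatible with ``red edges stay internal'' is one token per neighbourhood class with respect to the noose vertices, and a priori there can be up to $2^{\ell}$ such classes. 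The missing ingredient is a neighbourhood-complexity bound: the paper proves (Claim \ref{claim:neighbourhood-outerplanar}) that for a noose on $k$ vertices in a plane graph, the vertices on one side realize at most $h(k)=4k-4$ distinct neighbourhoods in the noose, by counting separately the empty class, the $k$ singletons, at most $2k-3$ classes of size two (via outerplanarity of a smoothed auxiliary graph), and at most $k-2$ classes of size at least three (via a planar recursion). This $4k-4$ is where the factor $4$ in $4\bw(G)$ actually comes from --- not from ``four sub-arcs of tokens''.

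Your second idea, scheduling the eliminations along the cyclic order of the noose so that each contraction ``touches only a bounded local window of tokens'', is not substantiated and is not how the bound is obtained; the red degree created by a contraction is governed by how many middle-set vertices distinguish the two merged neighbourhood classes plus the accumulated red edges, and a cyclic-order schedule does not obviously control this. In the paper's merge step one sets $I=\mathrm{mid}(e_1)\cap\mathrm{mid}(e_2)\setminus\mathrm{mid}(e)$, contracts each $A_i$ by neighbourhood class in $\mathrm{mid}(e_i)\setminus I$ to get $\widetilde{A}_i$, then contracts $I\cup\widetilde{A}_1\cup\widetilde{A}_2$ by class in $\mathrm{mid}(e)$; the transient red degrees are bounded simply by sums such as $|\widetilde{A}_1|+|\widetilde{A}_2|+|I|-2$, and the two regimes $4k$ and $\frac{9}{2}k-3$ fall out of applying $h(\cdot)$ together with the linear constraints $x+y\leq k+2$, $x+z\leq k$, $y+z\leq k$ on the pairwise intersections of the three middle sets. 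Without Claim \ref{claim:neighbourhood-outerplanar} (or some substitute bound on neighbourhood complexity across a noose) your accounting cannot get off the ground. The degenerate case $\bw(G)\leq 1$ and the final passage to treewidth via $\bw(G)\leq\tw(G)+1$ are fine.
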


This mainly relies on the following result.

\begin{claim}\label{claim:neighbourhood-outerplanar}
	If $N$ is a noose with $|V(N)|=k$ that separates a plane graph $G$ in $G_1$ and $G_2$, then
	$\Omega(V(G_1) \setminus V(G_2) ,V(G_2)) = \Omega(V(G_1) \setminus V(N),V(N))$ and
	$|\Omega(V(G_1)\setminus V(N),V(N))| \leq 4k-4 =:h(k)$.
\end{claim}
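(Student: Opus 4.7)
The plan is to first establish the equality by a topological argument and then bound the cardinality via Euler's formula on a carefully constructed plane graph, proceeding by induction on $k$. For the equality, observe that $N$ meets $G$ only at $V(N)$ and that $\Sigma \setminus N = \Delta_1 \sqcup \Delta_2$ with $V(G_i) \setminus V(N) \subseteq \Delta_i$ for $i = 1, 2$. Any edge of $G$ from $u \in V(G_1) \setminus V(N)$ to a vertex of $V(G_2)$ is a curve in $\Sigma$ crossing $N$; since $N \cap G = V(N)$, the crossing (and so the edge's other endpoint) lies in $V(N)$. Hence $N_G(u) \cap V(G_2) = N_G(u) \cap V(N)$ for every such $u$, giving the first identity.

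For the bound, for each non-empty $S \in \Omega$ pick a representative $u_S \in V(G_1) \setminus V(N)$ with $N_G(u_S) \cap V(N) = S$, and form the plane graph $R$ on the sphere whose vertices are $V(N) \cup \{u_S : S \in \Omega, S \neq \emptyset\}$ and whose edges are the $k$ arcs of $N$ between cyclically consecutive vertices of $V(N)$, drawn along the boundary of $\overline{\Delta_1}$, together with the $G$-edges $u_S v$ for $v \in S$ (all lying in $\overline{\Delta_1}$). Then $R$ is a plane graph with $V(N)$ on an outer cycle of length $k$ and all representatives in $\Delta_1$. Every inner face of $R$ has length at least $3$, and since the bipartite subgraph of $R$ obtained by removing the noose arcs is triangle-free, every triangular inner face of $R$ uses exactly one noose arc; in particular, there are at most $k$ triangular inner faces. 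This structural set-up, together with Euler's formula, forms the skeleton of the counting argument.

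I would then close by induction on $k$. The base case $k = 2$ is immediate, as all four subsets of $V(N)$ may occur. For the inductive step, if every representative has $|S_u| \leq 1$ then $|\Omega| \leq k + 1 \leq 4k - 4$. Otherwise, pick a representative $u_{S_0}$ with $|S_0| = d \geq 2$; its $d$ edges partition $\overline{\Delta_1}$ into cells $C_1, \ldots, C_d$ of cyclic boundary lengths $k_1, \ldots, k_d \geq 2$ with $\sum_j k_j = k + d$, and the inductive hypothesis applied in each cell bounds the classes realised there by $h(k_j) = 4k_j - 4$. The main obstacle is that a naive union bound yields only $|\Omega| \leq 1 + \sum_j (4k_j - 4) = 4k + 1$, overshooting the target by $5$; closing this gap demands a careful inclusion-exclusion accounting at the shared endpoints of adjacent cells, together with a judicious choice of $u_{S_0}$ (for example, one maximising $d$, so that the cells are narrow and their class collections heavily overlap at shared corners), and likely a separate treatment of small $d$. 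An alternative strategy, avoiding the recursion, is to refine the Euler-formula bound on $R$ directly via a discharging argument exploiting the at-most-$k$ triangular inner faces to sharpen the bipartite-planar inequality $|E(R)| \leq 2|V(R)| - 4$ into the desired $4k - 4$ bound.
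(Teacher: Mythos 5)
Your topological set-up is fine: the identity $\Omega(V(G_1)\setminus V(G_2),V(G_2))=\Omega(V(G_1)\setminus V(N),V(N))$ follows exactly as you say, and placing one representative per non-empty class inside $\Delta_1$ to form a plane graph on $V(N)\cup\{u_S\}$ is the right picture. But the counting half of your argument is not a proof; it is an explicit admission that neither of your two strategies closes. Your recursion on the cells cut out by a high-degree representative loses an additive constant at every level (you compute the overshoot as $5$ yourself), and you only gesture at ``careful inclusion-exclusion'' and ``a judicious choice of $u_{S_0}$'' without showing that the loss can actually be recovered; since the bound $4k-4$ is tight, there is no slack to absorb a sloppy union bound. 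The discharging alternative is likewise left entirely unexecuted. As it stands, the strongest bound you have actually established is roughly $4k+1$ per recursion level, which compounds.

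The idea you are missing is to stratify the classes by their size rather than recursing on cells. The paper counts: one class of size $0$; at most $k$ classes of size $1$; at most $2k-3$ classes of size $2$, because after \emph{smoothing} each degree-$2$ representative (replacing the path $u\,a\,v$ by an edge $uv$) one obtains an outerplanar multigraph on $V(N)$ whose distinct edges are in bijection with the size-$2$ classes, hence at most $2k-3$ of them; and at most $k-2$ classes of size $\geq 3$, by a separate induction exploiting that a representative of degree $\ell\geq 3$ splits the disc into $\ell$ regions with boundary lengths $a_i+1$ summing to $k+\ell$, giving $n_3(k)\leq 1+\sum_i(a_i-1)\leq k-2$. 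Summing $1+k+(2k-3)+(k-2)=4k-4$. Note that the induction is applied \emph{only} to the high-degree representatives, where the recursion gains (rather than loses) an additive term per level because $\ell\geq 3$; the size-$0$, $1$, $2$ classes are handled globally by the outerplanarity bound. Without this split, your cell recursion mixes the low-degree classes into every cell and that is precisely where your additive loss comes from.
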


\begin{proof}
	We will count the different possible neighbourhoods by size:
	\begin{itemize}
		\item The only possibility for size 0 is the empty neighbourhood.

		\item The possibilities for size 1 are the singletons of $V(N)$ and there are $k$ of them.

		\item For the neighbourhoods of size 2, we pick one vertex for each of them, and call $A$
		the set of picked vertices. We now consider $G_1[A\cup V(N)]$ and smooth the vertices of $A$
		in it, i.e. for each vertex $a$ of $A$ with incident edges $ua,av$, we remove vertex $a$
		and edges $ua,av$ and replace them by edge $uv$, this operation preserves planarity and the
		resulting graph $H$ is an outerplanar graph on vertices $V(N)$ because they were on the
		outerface of $G_1[A \cup V(N)]$. Since the number of edges of $H$ is at most $2k-3$ because
		it is outerplanar and is equal to $|A|$, the number of different neighbourhoods is bounded
		by $2k-3$.

		\item For the neighbourhoods of size more than 3, we once again pick one vertex for each of
		them, and call $B$ the set of picked vertices. We now consider $G_1[B \cup V(N)]$ which is
		planar. We show $|B|\leq n_3(k)\leq k-2$ by induction on $k=V(N)$, where $n_3(k)$ denotes
		the maximum number of vertices of $B$ of degree more than 3 we can have in $G_3[B\cup
		V(N)]$. First, if $k \leq 2$ then there are no such neighbourhoods, and if $k=3$, there is
		exactly one. Then for $k>3$, $$n_3(k)=1+\max\left\{ \sum \limits_{i=1}^\ell n_3(a_i+1) : \ell \geq
		3, \forall i \in [\ell], a_i \geq 1, \sum \limits_{i=1}^\ell a_i = k \right\}$$
		because after placing one vertex $v$ of degree $\ell \geq 3$, we must have subdivided our
		instance into $\ell$ smaller instances because edges incident to v will not be crossed by
		other edges. Using the induction hypothesis, we have
		$$n_3(k) \leq 1 + \sum \limits_{i=1}^\ell (a_i - 1) \leq 1 + k - l \leq k - 2$$
	\end{itemize}
	By summing the previous bounds, we conclude that $|\Omega(V(G_1) \setminus V(N),V(N))|\leq 4k - 4$
\end{proof}

Note that this bound is tight: denote the vertices in their order on the noose by $[k]$, we can
place vertices with neighbourhoods $\{\varnothing\} \cup \{ \{i\} : i \in [k] \} \cup \{ \{i,i+1\} :
i \in [k-1]\} \cup \{ \{1,i,i+1\}, \{1,i+1\} : i \in [2,k-1] \}$.

\begin{proof}[Proof of \cref{lemma:tww-bw-planar}]
	Consider a connected planar graph $G$.
	If $G$ has branchwidth at most $1$, it cannot contain a path on $4$ vertices as a subgraph,
	hence it is a star and has twin-width $0$ (first contract twins and finish with the root).
	We now consider the case when $\bw(G) \geq 2$.
	$G$ admits a sphere-cut decomposition $(T,\mu)$ of width $k:=\bw(G)$.
	We root $T$ arbitrarily.
	We proceed by induction on $T$.
	Consider a parent edge $e$ with children edges $e_1,e_2$.
	We assume that, for $i \in \{1,2\}$, $V(G_{e_i}-V(N_{e_i}))$, has been contracted to a set $A_i$
	according to the neighbourhood in $V(N_{e_i})$. Consequently, $|A_i|$ is at most
	$|\Omega(V(G_{e_i}-V(N_{e_i})),V(N_{e_i}))|$, and red edges incident to $A_i$ have both
	endpoints in $A_i$.

	Let $x:=|V(N_e) \cap V(N_{e_1})|$ and $y:=|V(N_e)\cap V(N_{e_2})|$. Note that $x+y-2 \leq
	|V(N_e)| \leq k$.
	
	Let $I:=V(N_{e_1}) \cap V(N_{e_2}) \setminus V(N_e)$, and $z:=|I|$
	
	For $i \in \{1,2\}$, we contract vertices of $A_i$ that have the same neighbourhood in
	$V(N_{e_i}) \setminus I$, and call the resulting set of vertices $\widetilde{A}_i$.
	The vertices of $I$ now have red degree at most $|\widetilde{A}_1| + |\widetilde{A}_2|$, while the vertices of $\widetilde{A}_i$
	have red degree at most $|I| + |\widetilde{A}_i|-1$.

	We then contract the vertices of $I \cup \widetilde{A}_1 \cup \widetilde{A}_2$ that have the
	same neighbourhood in $V(N_e)$, and call $A$ the resulting set of vertices.
	Contracted vertices have red degree at most $|\widetilde{A}_1| + |\widetilde{A}_2| + |I|-2$.
	Using Claim \ref{claim:neighbourhood-outerplanar}, we obtain the following inequalities:

	$|\widetilde{A}_1| + |\widetilde{A}_2| \leq |\Omega(V(G_{e_1}-V(N_{e_1})),V(N_{e_1})\setminus I)| +
	|\Omega(V(G_{e_2}-V(N_{e_2})),V(N_{e_2})\setminus I)|\leq (4x-4) + (4y-4) \leq 4k$

	$|\widetilde{A}_1| + |\widetilde{A}_2| + |I| - 2 \leq
	|\Omega(V(G_{e_1}-V(N_{e_1})),V(N_{e_1})\setminus I)| +
	|\Omega(V(G_{e_2}-V(N_{e_2})),V(N_{e_2})\setminus I)| + z - 2 \leq 4x + 4y + z - 10 =
	\frac{7}{2}(x+y) + \frac{1}{2}(x+z) + \frac{1}{2}(y+z) - 10$

	We have the following constraints on $x,y,z$: $x+y \leq k+2$, $|V(N_{e_1})| = x+z \leq k$,
	$|V(N_{e_2})| = y+z \leq k$.

	By summing inequalities, we obtain $|\widetilde{A}_1| + |\widetilde{A}_2| + |I| - 2 \leq \frac{9}{2}k - 3$.

	$V(G_e-V(N_e))$ has been contracted to a set $A$ of at most $|\Omega(V(G_e-V(N_e)),V(N_e))|$
	vertices.

	We conclude that $tww(G) \leq \max(4k,\frac{9}{2}k-3)$
\end{proof}

\section{Twin-width of planar graphs}

\begin{theorem}
	The twin-width of planar graphs is at most $183$.
\end{theorem}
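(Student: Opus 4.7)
The plan is to construct a contraction sequence for $G$ via a BFS-based layering of the embedding, since a planar graph can have arbitrarily large branchwidth and Lemma~\ref{lemma:tww-bw-planar} cannot be invoked on $G$ as a whole. Fix an embedding of $G$ in $\Sigma$ and an arbitrary root $r$, and partition $V(G)$ into BFS layers $L_0=\{r\}, L_1,\ldots,L_d$. The structural fact I would exploit is that, for a suitably chosen small constant $c$, the subgraph induced by any $c$ consecutive BFS layers is $c$-outerplanar with respect to the embedding and therefore has branchwidth bounded by an absolute constant, so admits a sphere-cut decomposition of small width via the lemmas of Section~2.

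Grouping the layers into \emph{slabs} of $c$ consecutive layers each, I would process the slabs from the deepest to the shallowest. Within a single slab, I would emulate the strategy of Lemma~\ref{lemma:tww-bw-planar}: traverse the slab's sphere-cut decomposition bottom-up and, at every edge of the decomposition, merge vertices that already have the same neighbourhood on the corresponding noose, so that each subtree collapses to at most $4k-4$ \emph{summary vertices} for a noose of length $k$, as guaranteed by Claim~\ref{claim:neighbourhood-outerplanar}. After a slab is fully processed, it survives only as this small family of summary vertices sitting on its interface with the next, shallower slab.

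The critical step is the bookkeeping at the transition between slabs, where the red degree of any vertex receives three kinds of contributions: the within-slab red edges created by the current sphere-cut contraction (bounded exactly as in Lemma~\ref{lemma:tww-bw-planar}), the red edges inherited from the already-processed deeper slab still attached to its summary vertices, and the red edges towards the still-unprocessed shallower slab along the upper interface of the current slab. Each of these contributions can again be controlled via Claim~\ref{claim:neighbourhood-outerplanar} applied to a noose in $\Sigma$ separating the relevant two regions. The main obstacle I anticipate is the fine tuning of the slab thickness $c$ and of the noose lengths used at slab interfaces, so as to satisfy simultaneously three linear constraints on $x+y$, $x+z$, $y+z$ analogous to those producing the bound $\frac{9}{2}k-3$ in Lemma~\ref{lemma:tww-bw-planar}, and to check that the three red-degree contributions add up to at most $183$ at every step of the sequence.
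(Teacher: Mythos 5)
There is a genuine gap, and it sits exactly where you locate your ``critical step'': the interface between two consecutive slabs is an entire BFS layer, whose size is unbounded, so no noose of bounded length separates slab $i$ from slab $i-1$ and Claim~\ref{claim:neighbourhood-outerplanar} cannot be applied there. Concretely, when you contract two vertices $u,v$ of the current slab because they agree on a noose of the slab's sphere-cut decomposition, the contraction also creates red edges to every vertex of $N(u)\,\Delta\,N(v)$ lying in the still-unprocessed shallower slab; since that slab consists of original, uncontracted vertices and the adjacency across the interface is not mediated by any bounded-size separator, this red degree can be arbitrarily large. The dual problem also arises: if instead you only contract vertices that agree on the full interface towards the shallower slab, the number of neighbourhood classes with respect to an unbounded vertex set is unbounded, so the slab does not collapse to a bounded family of summary vertices. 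Either way the induction breaks at the first slab transition. The $c$-outerplanarity of a slab bounds its branchwidth, but Lemma~\ref{lemma:tww-bw-planar} only controls red edges internal to the graph it is applied to; it says nothing about edges leaving the slab.

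The paper's proof avoids this by decomposing \emph{vertically} rather than horizontally: using a triangulation, a BFS tree $T$ rooted on the outer face, and Sperner's Lemma (following the planar product-structure argument), it recursively cuts the graph into regions bounded by at most five vertical paths of $T$. Every vertex interior to such a region has all its external neighbours on those bounding paths, and after restricting attention to three consecutive BFS layers the bounding paths contribute at most $15$ vertices, so Claim~\ref{claim:neighbourhood-outerplanar} applies with a noose of length at most $15$ and yields the bounded summary sets $A^j$ with red edges confined to $A^{j-1},A^j,A^{j+1}$. The BFS layering plays the role you intend for it, but only \emph{inside} each vertically bounded region; it is the constant number of vertical paths, not the slab thickness, that makes every interface bounded. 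Your three-constraint optimisation from Lemma~\ref{lemma:tww-bw-planar} never gets the chance to run because the quantities $x,y,z$ you would feed it are not finite constants in your setting. To repair the argument you would need to import the Sperner/vertical-path decomposition (or an equivalent device producing bounded separators between the processed and unprocessed parts), at which point you would essentially be reconstructing the paper's proof.
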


\begin{proof}
	We will make use of the argument used to decompose planar graphs in \cite[Lemma
	5]{PlanarStrongProduct21}, and produce a $d$-contraction sequence of a planar graph $G$
	inductively on the decomposition, $d \leq 183$. The embedding of the graph will be useful in our
	arguments to make use of Claim \ref{claim:neighbourhood-outerplanar}. Recall that $h(k)=4k-4$.

	We may suppose that $G$ is connected since the twin-width of a graph is simply the maximum of
	the twin-width over its connected components. We denote by $G^+$ a triangulation containing $G$
	as a spanning subgraph. Let $T$ be a BFS spanning tree in $G^+$ with root $r$ on its outerface.
	Note that since $G$ is a subgraph of $G^+$, the plane embedding of $G^+$ gives a plane embedding
	of $G$ and its subgraphs.

	For a cycle $C$, we write $C=[P_1,\dots,P_k]$ if the $P_i$ are pairwise disjoint, and the last
	vertex of $P_i$ is adjacent to the first vertex of $P_{i+1}$ for $i \in [k]$, with $P_{k+1}=P_1$.
	For a path $P$, we write $P=[P_1,\dots,P_k]$ if the $P_i$ are pairwise disjoint, and the last
	vertex of $P_i$ is adjacent to the first vertex of $P_{i+1}$ for $i \in [k-1]$.

	The following version of Sperner's Lemma is used to recursively decompose $G^+$.
	\begin{lemma}[Sperner's Lemma]
		Let $G$ be a near-triangulation whose vertices are coloured $1,2,3$, with the outerface
		$F=[P_1,P_2,P_3]$ where each vertex in $P_i$ is coloured $i$. Then $G$ contains an internal
		face whose vertices are coloured $1,2,3$.
	\end{lemma}

	We prove inductively the following:

	\begin{lemma}
		Let $P_1,\ldots,P_k$ for some $k \in [5]$ be pairwise disjoint vertical paths of $T$ such
		that $F=[P_1,\ldots,P_k]$ is a cycle in $G^+$, let $H$ be the subgraph of $G$ induced by
		the vertices of $F$ and the set $X$ of vertices in the (strict) interior of $F$, with $r
		\notin X$. Let $X^j$ denote the set of vertices of $X$ that are at a distance $j$ from $r$
		in $T$. We can construct a partial $d$-contraction sequence of $H$ to $H'$ such that for
		each $j$, the vertices of $X^j$ are contracted to obtain a set of vertices $A^j$ in $H'$,
		$|A^j| \leq h(3k)$, the vertices of $A^j$ have red neighbours only in $A^{j-1},A^j,A^{j+1}$,
		and $d \leq 183$. 
	\end{lemma}
	
	\begin{proof}
		If we have 3 vertices then there is no vertex in the interior of the triangle, the empty
		contraction sequence satisfies the properties.

		Otherwise, we decompose $H$ using the argument of \cite{PlanarStrongProduct21}, see
		\cref{fig:decomposition}. First, we
		coulour the vertices of $H$ with $k$ coulours as follows. For each vertex $v \in V(H)$, we
		assign coulour $i \in [k]$ if the first vertex of $F$ on the path from $v$ to $r$ in $T$ is a
		vertex of $P_i$. This is well defined because $r$ is on the outerface of $G^+$.
		
		We set up for Sperner's Lemma with the following constructions:
		\begin{itemize}
			\item If $k=1$ then, since $F$ is a cycle, $P_1$ has at least 3 vertices so we can
			write $P_1=[u,R_2,v]$, and set $R_1:=u,R_3:=v$.
			\item If $k=2$ then, since $F$ is a cycle, one of $P_1$ and $P_2$ has at least 2
			vertices. W.l.o.g. assume it is $P_1$, then we write $P_1=[u,R_2]$, and set
			$R_1:=u,R_3:=P_2$.
			\item If $k=3$ then set $R_1:=P_1,R_2:=P_2,R_3:=P_3$.
			\item If $k=4$ then set $R_1:=P_1,R_2:=P_2,R_3:=[P_3,P_4]$
			\item If $k=5$ then set $R_1:=P_1,R_2:=[P_2,P_3],R_3:=[P_4,P_5]$
		\end{itemize}
		Note that $F=[R_1,R_2,R_3]$. We give colour $i$ to the vertices of $H$ whose first vertex of
		$F$ on their path to the root in $T$ is in $R_i$.

		Applying Sperner's Lemma, we obtain a triangular face of $G^+$, with vertices $v_1, v_2,
		v_3$ of the 3 coulours. We denote $Q'_i$ the path in $T$ from $v_i$ to $r$ restricted to its
		vertices in $X$ (it might be empty). These paths delimit at most 3 faces $F_1,F_2,F_3$, each
		of which having at most 5 vertical paths around it.
		We can apply the induction hypothesis on each of the faces to obtain partial contraction
		sequences. We first apply them in an arbitrary order (the contents of the faces are antiadjacent to each other). We denote by $A_\alpha^j$ the contracted sets of face $F_\alpha$.

		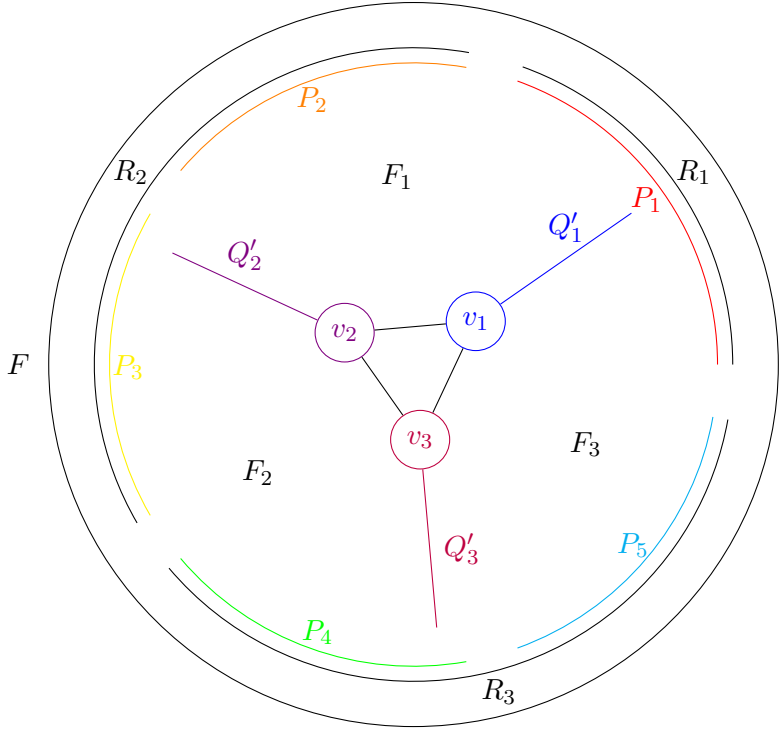
\begin{figure}
			\begin{tikzpicture}

			\node (11) at (0:4) {};
			\node (12) at (70:4) {};
			\node (21) at (80:4) {};
			\node (22) at (140:4) {};
			\node (31) at (150:4) {};
			\node (32) at (210:4) {};
			\node (41) at (220:4) {};
			\node (42) at (280:4) {};
			\node (51) at (290:4) {};
			\node (52) at (350:4) {};
			\node[draw,circle,color=blue] (v1) at (35:1) {$v_1$};
			\node[draw,circle,color=violet] (v2) at (155:1) {$v_2$};
			\node[draw,circle,color=purple] (v3) at (275:1) {$v_3$};
			\node (F1) at (95:2.5) {$F_1$};
			\node (F2) at (215:2.5) {$F_2$};
			\node (F3) at (335:2.5) {$F_3$};
			\node (R11) at (0:4.2) {};
			\node (R12) at (70:4.2) {};
			\node (R21) at (80:4.2) {};
			\node (R22) at (210:4.2) {};
			\node (R31) at (220:4.2) {};
			\node (R32) at (350:4.2) {};

			\draw (v1) -- (v2) -- (v3) -- (v1);

			\node (F) at (180:5.2) {$F$};
			\draw circle[radius=4.8];
			\draw[color=blue] (v1) -- (35:3.5) node[midway,yshift=12] {$Q'_1$};
			\draw[color=violet] (v2) -- (155:3.5) node[midway,yshift=12] {$Q'_2$};
			\draw[color=purple] (v3) -- (275:3.5) node[midway,xshift=12] {$Q'_3$};

			\draw[color=red] (11) arc[start angle=0, end angle=70, radius=4] (12)
			node[midway,xshift=-6,yshift=-3] {$P_1$};
			\draw[color=orange] (21) arc[start angle=80, end angle=140, radius=4] (22)
			node[midway,xshift=1,yshift=-7] {$P_2$};
			\draw[color=yellow] (31) arc[start angle=150, end angle=210, radius=4] (32)
			node[midway,xshift=7,yshift=-1] {$P_3$};
			\draw[color=green] (41) arc[start angle=220, end angle=280, radius=4] (42)
			node[midway,xshift=3,yshift=6] {$P_4$};
			\draw[color=cyan] (51) arc[start angle=290, end angle=350, radius=4] (52)
			node[midway,xshift=-5,yshift=5] {$P_5$};
			\draw (R11) arc[start angle=0, end angle=70, radius=4.2] (R12)
			node[midway,xshift=7,yshift=4] {$R_1$};
			\draw (R21) arc[start angle=80, end angle=210, radius=4.2] (R22)
			node[midway,xshift=-8,yshift=4] {$R_2$};
			\draw (R31) arc[start angle=220, end angle=350, radius=4.2] (R32)
			node[midway,xshift=1,yshift=-8] {$R_3$};

			\end{tikzpicture}
			\centering
			\caption{Decomposition of $F$}
			\label{fig:decomposition}
		\end{figure}

		For each $\alpha$ and increasing $j$, we contract all vertices of $A_\alpha^j$ that are in
		the same neighbourhood class with respect to $P_1,\ldots,P_k$ in $G$. Note that only vertices on
		layers $j-1,j,j+1$ of the $P_i$ may be adjacent and that there are at most 3 of the $P_i$s
		that are adjacent to $F_\alpha$. This gives us sets $\widetilde{A}_\alpha^j$ of size at most
		$h(9)$ by Claim \ref{claim:neighbourhood-outerplanar}, since by removing the vertices of
		$Q'_i$ and keeping only vertices of layers $j-1,j,j+1$ we obtain a graph that is still
		planar and in which the cycle delimiting $F_\alpha$ gives a noose with at most $9$ vertices
		(vertical paths have at most 1 vertex per layer).

		Then for increasing $j$, we contract vertices of $\widetilde{A}_1^j \cup \widetilde{A}_2^j \cup \widetilde{A}_3^j \cup Q'_1 \cup
		Q'_2 \cup Q'_3$ that are in the same neighbourhood class with respect to $P_1,\ldots,P_k$ in
		$G$, see \cref{fig:contraction}. This gives sets $A^j$ of size at most $h(15)$ by Claim
		\ref{claim:neighbourhood-outerplanar}, because we can deduce a noose from
		$F=[P_1,\ldots,P_k]$ and by keeping only the vertices of layers $j-1,j,j+1$ we have at most
		$15$ vertices on the noose.

		\begin{figure}
			\begin{tikzpicture}
			\tikzset{
			group/.style = {circle,draw,minimum size=#1, inner sep=0pt, outer sep=0pt},
			node/.style = {circle,fill,minimum size=5pt, inner sep=0pt, outer sep=0pt}
			  }	
			\node[group=40pt] (0) at (0,0) {$A^{j-1}$};
			
			\node[node,label=above left:$q_1^j$] (q1) at (3,2.5) {};
			\node[group=30pt] (A1) at (3,1.5) {$\widetilde{A}_1^j$};
			\node[node,label=above left:$q_2^j$] (q2) at (3,0.5) {};
			\node[group=30pt] (A2) at (3,-0.5) {$\widetilde{A}_2^j$};
			\node[node,label=left:$q_3^j$] (q3) at (3,-1.5) {};
			\node[group=30pt] (A3) at (3,-2.5) {$\widetilde{A}_3^j$};
			\node (a) at (3,3) {};\node (b) at (4.2,2.9) {};
			\node (c) at (3,-3.4) {};\node (d) at (4.2,-3.3) {};

			\node[node,label=above right:$q_1^{j+1}$] (q'1) at (6,2.5) {};
			\node[group=30pt] (A'1) at (6,1.5) {$\widetilde{A}_1^{j+1}$};
			\node[node,label=above right:$q_2^{j+1}$] (q'2) at (6,0.5) {};
			\node[group=30pt] (A'2) at (6,-0.5) {$\widetilde{A}_2^{j+1}$};
			\node[node,label=right:$q_3^{j+1}$] (q'3) at (6,-1.5) {};
			\node[group=30pt] (A'3) at (6,-2.5) {$\widetilde{A}_3^{j+1}$};
			\node (a') at (6,3) {};\node (b') at (4.8,2.9) {};
			\node (c') at (6,-3.4) {};\node (d') at (4.8,-3.3) {};

			\node[draw,dashed,blue,fit={(q1) (A1) (q2) (A2) (q3) (A3)}] (box) {};
			\node[blue] (Aj) at (1.7,-2.6) {$A^j$};

			\draw (q1) edge (0) edge (A1) edge (q'1) edge (A'1) edge (a) edge (b);
			\draw (A1) edge (0) edge (q2) edge (A'1) edge (q'2);
			\draw (q2) edge (0) edge (A2) edge (q'2) edge (A'2);
			\draw (A2) edge (0) edge (q3) edge (A'2) edge (q'3);
			\draw (q3) edge (0) edge (A3) edge (q'3) edge (A'3);
			\draw (A3) edge (0) edge (A'3) edge (c) edge (d);

			\draw (q'1) edge (A1) edge (A'1) edge (a') edge (b');
			\draw (A'1) edge (q2) edge (q'2);
			\draw (q'2) edge (A2) edge (A'2);
			\draw (A'2) edge (q3) edge (q'3);
			\draw (q'3) edge (A3) edge (A'3);
			\draw (A'3) edge (c') edge (d');

			\end{tikzpicture}
			\centering
			\caption{Second phase of the contraction procedure}
			\label{fig:contraction}
		\end{figure}

		We now bound the red degree that may appear in our contraction sequence.
		When first contracting $A_\alpha^j$ the number of red edges of its vertices is at most
		$|\widetilde{A}_\alpha^{j-1}|+|A_\alpha^j|+|A_\alpha^{j+1}|+6-2$ where the $6$ term bounds
		the number of vertices on the $Q'_i$ that are adjacent to vertices of $A_\alpha^j$, this
		amounts to at most $h(9)+2h(15)+4=148$.

		We then observe that the number of contractions of pairs of vertices of $\widetilde{A}_1^j \cup
		\widetilde{A}_2^j \cup \widetilde{A}_3^j$ that may happen when obtaining $A^j$ is at most 5
		for the following reasons.
		We have at most two contractions to contract the potential vertices with empty
		neighbourhoods coming from each $F_i$. Furthermore, at most 3 vertices of the $P_i$ can have
		adjacent vertices in two $F_\alpha$ (the first vertices of $F$ on the path from each $v_i$ to $r$
		in $T$), so we may contract the two potential representatives of the neighbourhood classes
		consisting of a singleton of such a vertex in the two adjacent $F_\alpha$. 
		Since we know $|A^j| \leq h(15)$ and each contraction may reduce the number of vertices by
		at most $1$, we have $|\widetilde{A}_1^j| + |\widetilde{A}_2^j| + |\widetilde{A}_3^j| \leq
		h(15) + 5$. 

		The red degree of a vertex of $Q'_i$ is bounded by the sizes of the $|\widetilde{A}_\alpha^j|$
		of its 3 adjacent layers on the two faces to which it is adjacent, this is because by always
		contracting to the same vertex in each neighbourhood class we can ensure that the number of
		red edges to this vertex is always increasing. If we add the size of the last face for each
		layer (positive terms), we can easily bound using the previous inequality, by
		$3(h(15)+5)=183$.

		The red degree of a vertex of layer $j$ when contracting to form $A^j$ is at most
		$|A^{j-1}|+|\widetilde{A}_1^j|+|\widetilde{A}_2^j|+|\widetilde{A}_3^j|+
		|\widetilde{A}_1^{j+1}|+|\widetilde{A}_2^{j+1}|+|\widetilde{A}_3^{j+1}|+6-2$, 
		where the $6$ term bounds the number of vertices of $Q'_i$ in the layers $j$ and $j+1$.
		combining previous inequalities, we may bound by $3h(15)+ 2\cdot 5 + 4 = 182$.
	\end{proof}

	When the outerface is reached, we can contract arbitrarily to a single vertex layer by layer,
	and then contract the path. Doing so we have red degree at most $3h(9)+1<183$ because there are only
	$3$ vertices on the outerface.

	We conclude that we have constructed a $d$-contraction sequence of $G$ such that $d \leq 183$.
\end{proof}

\section{Bipartite graph}

\begin{theorem}
	The twin-width of the universal bipartite graph $\mathcal{B}(n)$ is $n - \log(n) + \mathcal{O}(1)$.
\end{theorem}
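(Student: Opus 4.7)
Write $X := [n]$ and $Y := 2^{[n]}$ for the two sides of $\mathcal{B}(n)$; recall that $x \in X$ is adjacent to $A \in Y$ iff $x \in A$. We prove matching bounds $\tww(\mathcal{B}(n)) \le n - \log_2 n + \mathcal{O}(1)$ and $\tww(\mathcal{B}(n)) \ge n - \log_2 n - \mathcal{O}(1)$.

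For the upper bound, the naive strategy---merging all $2^n$ $Y$-vertices into a single accumulator $Z$ (in Gray-code order, so that the intermediate red degree is at most $\lceil \log_2 t \rceil$ after $t$ merges) and then absorbing the $X$-vertices one at a time---already yields $\tww(\mathcal{B}(n)) \le n$: at the moment $Z$ equals the full merge of $Y$, every coordinate of $[n]$ has been both ``in'' and ``out'' among $Z$'s constituents, so $Z$ is red-adjacent to all of $X$. To shave an additive $\log_2 n$, the plan is to arrange that, by the end of the $Y$-side contractions, $Z$ has also absorbed a pre-chosen set $S \subseteq X$ of size $k := \lceil \log_2 n \rceil$, so the final red degree to the remaining vertices of $X \setminus S$ is $n - k$. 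Absorbing $x \in S$ into a partial $Z$ is safe (introduces no new red edges to still-unmerged $Y$-vertices) exactly when every $Y_A$ with $x \in A$ already lies inside $Z$. I would therefore partition $Y$ into the $n$ fibres $Y^{(B)} := \{Y_A : A \cap S = B\}$ for $B \subseteq S$, process the fibres in a carefully chosen order (using Gray-code structure both on $2^S$ and on $2^{X \setminus S}$ inside each fibre), and absorb each $x \in S$ at the earliest moment at which every fibre $Y^{(B)}$ with $x \in B$ has been merged into $Z$. A schedule-level analysis, backed by a recursive splitting on $S$ (handling one coordinate of $S$ at each recursion level), should then bound the maximum red degree by $n - k + \mathcal{O}(1)$.

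For the lower bound, given any $d$-contraction sequence I would consider the first moment $\tau$ at which the number of parts containing some $Y$-vertex drops below a threshold $m$ (to be tuned, roughly $m = n$). By pigeonhole some part $P$ at time $\tau$ has $|Y(P)| \ge 2^n / m$, where $Y(P) := P \cap Y$. Set $V(P) := \bigcup Y(P) \setminus \bigcap Y(P) \subseteq [n]$ and $X(P) := P \cap X$. For distinct $A, A' \in Y(P)$ we have $\emptyset \ne A \triangle A' \subseteq V(P)$, so the map $A \mapsto A \cap V(P)$ is injective on $Y(P)$, giving $|V(P)| \ge \log_2 |Y(P)| \ge n - \log_2 m$. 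For every $x \in V(P) \setminus X(P)$ the existence of $A, A' \in Y(P)$ with $x \in A \setminus A'$ forces the trigraph edge from $P$ to the part containing $x$ to be red, so the red degree of $P$ is at least the number of distinct parts meeting $V(P) \setminus X(P)$. This immediately yields $d \ge n - \log_2 m - \mathcal{O}(1)$ unless $V(P) \setminus X(P)$ is concentrated in very few other parts; in that degenerate case, I would shift attention to one of those parts $P^*$ (which must then have large $X$-content, of size close to $n$) and argue, by a secondary pigeonhole over the many $Y$-containing parts present at time $\tau$, that the $X$-constituents of $P^*$ witness inconsistencies with, hence red edges to, at least $n - \log_2 n - \mathcal{O}(1)$ of them.

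The main obstacle is closing the degenerate case of the lower bound: the ``spread-out'' case is immediate, but proving that a part $P^*$ with large $X$-content is necessarily red to $\Omega(n - \log n)$ other parts requires a delicate secondary argument comparing the $X$-profiles of the surviving $Y$-parts at time $\tau$. The upper bound is principally a scheduling exercise, but getting the constants right --- verifying that every absorption of some $x \in S$ and every cross-fibre merge keeps the red degree at most $n - \log_2 n + \mathcal{O}(1)$ --- will also require careful bookkeeping.
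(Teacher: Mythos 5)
Both halves of your plan have genuine gaps, and in both cases the paper's argument is substantially simpler than what you propose.

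\textbf{Upper bound.} The single-accumulator schedule cannot achieve red degree $n-k+\mathcal{O}(1)$. By your own safety criterion, the first element $x\in S$ can be absorbed only once $Z$ contains \emph{every} fibre $Y^{(B)}$ with $x\in B$, i.e.\ $2^{k-1}$ of the $2^k$ fibres. At that moment the traces on $S$ of $Z$'s constituents include all $B\subseteq S$ with $x\in B$, so for every $y\in S\setminus\{x\}$ the accumulator $Z$ contains both a neighbour and a non-neighbour of $y$; hence $Z$ is red to all of $S\setminus\{x\}$ in addition to all of $X\setminus S$, giving red degree at least $(k-1)+(n-k)=n-1$ before the first absorption. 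No reordering of fibres avoids this, so the schedule yields nothing better than the naive bound. The paper's route is different and elementary: pick $A\subseteq X$ with $|A|=k=\lfloor\log_2 n\rfloor-1$ (so that $2^k\le n-k$), collapse each of the $2^k$ neighbourhood classes of $Y$ with respect to $A$ \emph{separately}; every $Y$-representative then has red degree $n-k$ and every vertex of $X\setminus A$ has red degree at most $2^k$. The crucial next step, absent from your plan, is to merge all of $X\setminus A$ into a single vertex \emph{before} touching the $Y$-representatives again; after that the $2^k$ representatives can be merged into one vertex of red degree $k+1$ and $A$ absorbed last, giving $\tww(\mathcal{B}(n))\le\max(2^k,n-k,k+1)$.

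\textbf{Lower bound.} Your ``spread-out'' case is fine, but the degenerate case (the coordinates of $V(P)\setminus X(P)$ concentrated in few parts because $X$ has been heavily merged) is exactly where the difficulty lies, and your ``secondary pigeonhole'' is not an argument. The paper sidesteps the case split entirely by choosing the stopping time differently: it looks at the moment just before the \emph{first} contraction involving a vertex of $X$, so that every part meeting $X$ is still a singleton. Then a part of $Y$ with red degree $d$ contains at most $2^d$ original sets (its constituents must agree on every coordinate to which it is not red), and since the total red degree incident to $X$ is at most $n(n-k)$, at most $\sum_a l_a 2^a\le n(n-k)\cdot\frac{2^{n-k}}{n-k}=n2^{n-k}$ original $Y$-vertices lie in contracted parts. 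The first contraction touching $X$ therefore creates at least $2^{n-1}-n2^{n-k}-1$ red edges to still-singleton $Y$-vertices, forcing $k\le\log_2 n+\mathcal{O}(1)$. If you want to keep your framework, the degenerate case would in effect force you to reprove this counting from scratch; it is cleaner to adopt the paper's stopping time from the start.
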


\begin{proof}
We first prove an upper bound.
Let $k \in [n]$.
We denote by $A$ a subset of $k$ vertices in $X=[n]$.
First, contract vertices of $Y=2^{[n]}$ that have the same neighbourhood in $A$.
When this is done, vertices of $A$ have no incident red edges, while vertices of $X \setminus A$
have red edges going to all remaining vertices of $Y$ (there are $2^k$ such vertices).

At this point the red degree is at most $\max(2^k,n-k)$.

The vertices of $X \setminus A$ can then be contracted into a single vertex 
without creating new red edges.
We can then contract all the remaining vertices of $Y$ into a new vertex of red degree $k+1$.
Finally, we contract $A$ onto the said vertex. This establishes that for any choice of $k$

\[\tww(\mathcal{B}(n)) \leq \max(2^k,n-k,k+1).\]

By choosing $k=\left\lfloor \log(n) - 1\right\rfloor$, we obtain
$\tww(\mathcal{B}(n)) \leq n - \log(n) + \mathcal{O}(1)$.

We now prove a lower bound.
Consider a $(n-k)$-contraction sequence for $\mathcal{B}(n)$.
We focus on the moment before the first contraction with a vertex of $X$.

Note that the number of initial vertices contained in a current vertex of $Y$ with red degree $d$
is at most $2^d$, hence at most $2^{n-k}$.

Since a contracted vertex of $Y$ has red degree at least $1$. From the bound on the red degree of
vertices of $X$, we know that there are at most $n(n-k)$ red edges. More precisely, if we denote
by $l_a$ for $a \in [n-f(n)]$ the number of vertices of $Y$ with red degree $a$, we have
\[\sum_{a=1}^{n-k} al_a \leq n(n-k).\]
The number of vertices that were contracted in $Y$ is therefore at most 
\[\sum_{a=1}^{n-k} l_a2^a = \sum_{a=1}^{n-k} al_a \cdot \frac{2^a}{a} \leq n(n-k) \cdot \frac{2^{n-k}}{n-k} = n2^{n-k}.\]

When contracting with a vertex of $X$ for the first time,
the number of red edges that become incident to it is therefore at least
\[ 2^{n-1} - n2^{n-k} - 1.\]
This is bounded by $n-k$, which implies 
$k \leq \log_2(n)+\mathcal{O}(1)$.

We can thus conclude that 
\[ \tww(\mathcal{B}(n)) = n - \log_2(n) + \mathcal{O}(1).\]
\end{proof}

\section{Conclusion}

Although, we provide no lower bound matching our upper bound on the twin-width of graphs of bounded
treewidth, we believe that the exponential dependency is necessary. One might want to consider
k-trees with heavy branching in order to find such a lower bound.

As for the twin-width of planar graphs, it might be possible to improve the given bound with a more
careful analysis. Another interesting prospect would be to adapt our arguments for planar graphs to
graphs of bounded genus, for which properties of the embedding might also prove useful.

\bibliography{main}
\bibliographystyle{alpha}

 \end{document}